\lstdefinelanguage{myalgo}{%
  morekeywords={if,then,else,repeat,while,for,to,forall,do,compute,call,return}
}
\title{Decomposition of Decidable First-Order Logics over Integers and Reals\thanks{Work supported by the Agence Nationale de la Recherche, grant ANR-06-SETIN-001.}}
\author{
Florent Bouchy, Alain Finkel\\
LSV, ENS Cachan, CNRS\\ 
CNRS UMR 8643, Cachan, France\\ 
\{bouchy,finkel\}@lsv.ens-cachan.fr
\and 
J{\'e}r{\^o}me Leroux\\
Laboratoire Bordelais de Recherche en Informatique\\
CNRS UMR 5800, Talence, France\\
leroux@labri.fr
}
\date{\today}
\def\dsp#1{\displaystyle{#1}}
\newcommand{\st}{\ |\ }
\newcommand{\moins}{\backslash}
\newcommand{\dom}[1]{\textsf{dom}(#1)}
\newcommand{\supp}[1]{\textsf{supp}(#1)}
\newcommand{\setR}{\mathbb{R}}
\newcommand{\setZ}{\mathbb{Z}}
\newcommand{\setD}{\mathbb{D}}
\newcommand{\setN}{\mathbb{N}}
\newcommand{\rondZ}{\mathfrak{Z}}
\newcommand{\rondD}{\mathfrak{D}}
\newcommand{\rondR}{\mathfrak{R}}
\newcommand{\fo}[1]{\textrm{FO}\left(#1\right)}
\newcommand{\partie}[1]{\textsf{P}(#1)}
\newcommand{\inter}[1]{\left\llbracket #1 \right\rrbracket}
\newcommand{\Lan}[1]{\textsf{Lan}(#1)}
\newcommand{\automaton}{A}
\newcommand{\scalar}[2]{\left<#1,#2\right>}
\newcommand{\udots}{\mathinner{\mskip1mu\raise1pt\vbox{\kern7pt\hbox{.}}
  \mskip2mu\raise4pt\hbox{.}\mskip2mu\raise7pt\hbox{.}\mskip1mu}} 
\newcommand{\func}{\mathcal{F}}
\newtheorem{thm}{Theorem}
\newtheorem{dfn}[thm]{Definition}
\newtheorem{prp}[thm]{Proposition}
\newtheorem{exe}[thm]{Example}
\renewcommand{\vec}{\mathbf}
\begin{document}
\maketitle

\begin{abstract}
We tackle the issue of representing infinite sets of real-valued vectors. This paper introduces an operator for combining integer and real sets. Using this operator, we decompose three well-known logics extending Presburger with reals. Our decomposition splits a logic into two parts : one integer, and one decimal (i.e. on the interval $[0,1[$). We also give a basis for an implementation of our representation.
\end{abstract}

\section{Introduction}

Verification (and model-checking in particular) of infinite systems like timed automata \cite{AD94} (and hybrid systems) and counter systems \cite{DBLP:conf/tacas/BardinFL04} need good symbolic representation classes ; by \emph{good}, we mean having closure properties (by first-order logic operators) and decidability results (for testing inclusion and emptiness). Presburger arithmetic \cite{Presburger,GinSpa66} enjoys such good properties, and some efficient implementations (using finite automata) have been intensively used for the analysis of counter systems \cite{BFLP03,DFvB05,BJW01,BJW03}.\\

Despite the fact that the complete arithmetic on reals is decidable \cite{Tarski48}, only some restricted classes of the first-order additive logic of reals (DBM, CPDBM, finite unions of convex polyhedra) have been used for the analysis of timed automata. This is mainly due to the fact that the algorithmic complexity of DBM is polynomial, which is the basis of efficient verification algorithms for timed automata in \textsc{UppAal} \cite{UppAal95,UppAal97}.\\

However, we would like to be able to use both integers and reals, for at least two reasons. First, we want to analyse timed counter systems \cite{AAB00,TReX,BBR97} in which the reachability sets contain vectors with both integers and reals. Second, we want to be able to use integers as parameters for a concise representation of pure reals : for instance, reals are used for the values of clocks and integers for expressing the parameters in CPDBM.\\

Fortunately, the first-order additive logic over integers and reals is decidable. Nevertheless, the algorithmic of sets combining integers and reals does not seem simple, even when it is based on finite automata like Real Vector Automata \cite{BBR97,BRW98} or weak RVA \cite{LIRA}, or based on quantifier elimination \cite{Weispfenning99}.\\

For that matter, the algorithmic of Presburger (using finite automata) and variations of DBM are quite efficient. Hence, our idea is to reduce the algorithmic difficulty of the first-order additive logic of integers and reals (and of some subclasses and decidable extensions) by decomposing a complex set of integers and reals into a finite union of sums of integer sets and decimal sets. By decimal, we mean numbers in the dense inteval $[0,1[$ ; then, we define a new class of sets as follows. Given $n$ sets of integers $(Z_i)_{0\leq i\leq n}$ and $n$ sets of decimals $(D_i)_{0\leq i\leq n}$, we introduce the operator \emph{ finite union of sums}, which builds the finite unions of the sums $Z_i + D_i$. This class is shown stable under boolean operations, cartesian product, quantification and reordering if both of the two initial classes are also stable.\\

One of our aims is then to re-use, in combining the best representations of these two initial sets $(Z_i)_{0\leq i\leq n}$ and $(D_i)_{0\leq i\leq n}$, the best libraries dealing with them to efficiently handle finite unions of $(Z_i+D_i)_{0\leq i\leq n}$ (for instance : \textsc{PresTAF} \cite{BLP06} for the integers and \textsc{PPL} \cite{PPL} for the reals).\\

We show that three of the main classes of mixed integer and real sets are in fact finite unions of sums of well-known classes. We prove that finite unions of sums of Presburger set of integers, and sets definable in the first-order additive logic of decimals are exactly the sets definable in the first-order logic of integers and reals. The finite unions of CPDBM are expressible as the finite unions of sums of Presburger-definable sets and DBM-definable decimal sets. Moreover, when we go beyond Presburger by considering RVA, we show that the class of sets representable by RVA in basis $b$ is the finite unions of sums of Presburger extended with a predicate $V_b$ (which gives integer powers in base $b$) and the additive logic of decimals extended with a predicate $W_b$ (which, similarily to $V_b$, gives negative powers in base $b$).\\

\section{Representations mixing integers and reals}\label{sec:operateur}

In this section, we motivate our work with a small example of timed automaton. We show that extracting integers from reals can yield more concise formula than pure reals. Then we introduce an operator combining integer and real sets of vectors.\\

\subsection{Timed Automata and DBM}

In order to study real-life systems involving behaviours that depend on time elapsing, timed automata are probably the most used and well-known model for such systems. As described in \cite{AD94}, the basic idea of timed automata is to add real-valued variables (called clocks) to finite automata. These clocks model temporal behaviours of the system, flowing at a universal constant rate~; each clock can be compared to an integer constant, and possibly reset to $0$. The only other guard allowed is called a diagonal constraint, consisting in comparing the difference of two clocks to an integer constant. As the clocks' values are unbounded, the state-space generated by a timed automaton is infinite~; therefore, regions are used to model a finite abstraction of the system's behaviour. Practically intractable because of its size, the region graph is then implemented as zones in most verification tools \cite{UppAal95,UppAal97,Kronos,CMC} modelling such real-time systems.\\

Technically, zones are represented by Difference Bound Matrices (DBM) \cite{BM83_DBM,Dill89} in these tools. A DBM is a square matrix representing the constraints between $n$ clocks defining a zone. Here, we see a DBM as a tuple $(\vec{c},\vec{\prec})$, where $\vec{c} = (c_{i,j})_{0\leq i,j\leq n}$, $\vec{\prec}=(\prec_{i,j})_{0\leq i,j\leq n}$, $c_{i,j}\in \setZ\cup\{+\infty\}$, and $\prec_{i,j}\in \{\leq,<\}$. Each element of this tuple is an element of the square matrix, defining a DBM set as follows :
$$R_{\vec{c},\vec{\prec}}=\{\vec{r}\in\setR^n \st \bigwedge_{0\leq i,j\leq n}r_i-r_j\prec_{i,j}c_{i,j}\}$$
In order to deal with constraints involving only one clock, the fictive clock $r_0$ is always set to the value $0$. An element $(c_{i,j},\prec_{i,j})$ means that $r_i - r_j \prec_{i,j} c_{i,j}$, where $r_i,r_j$ are clocks. Thus, each element of a DBM represents a diagonal constraint (i.e. a bounded difference). Finally, terms that do not represent any actual constraint are symbolized by $c_{i,j}=+\infty$.\\

\subsection{About extensions of DBM}

On the following example taken from \cite{BBFL-tacas-2003}, the timed automaton features 2 clocks $x$ and $y$, and a unique location. The automaton's behaviour is very simple~: $y$ is reset to $0$ as soon as it reaches $1$, while $x$ flows continually. In the initial state, the clocks are both set to $0$. Moreover, an invariant in the location ensures that $y$ never exceeds $1$.

  \begin{center}
    \begin{picture}(54,20)(-17,-6)
      \gasset{Nw=6,Nh=6,Nmr=4}
      \node(A)(0,0){}
      \node[Nframe=n](B)(-10,10){}
      \node[Nframe=n](C)(-10,-5){$(y \le 1)$}
      \drawedge(B,A){\shortstack{$y:=0$\\$x:=0$}}
      \drawloop[loopangle=0](A){\shortstack{$x \ge 1 \wedge y = 1$,\\$y:=0$}}
    \end{picture}
  \end{center}
\medskip
  
The clock diagram associated to the automaton explicitely shows this behaviour~:
\smallskip

  \begin{center}
  	\psset{unit=4pt}
    \begin{pspicture}(33,17)
      \psline{->}(0,0)(30,0)
      \psline{->}(0,0)(0,15)
      \psset{linewidth=1.5pt}
      \psline(0,0)(5,5)
      \psline(5,0)(10,5)
      \psline(10,0)(15,5)
      \psline(15,0)(20,5)
      \psline(20,0)(25,5)
      \psset{linestyle=dotted,linewidth=1pt}
      \psline(5,0)(5,15)
      \psline(10,0)(10,15)
      \psline(15,0)(15,15)
      \psline(20,0)(20,15)
      \psline(25,0)(25,15)
      \psline(0,5)(30,5)
      \psline(0,10)(30,10)
      \put(-2.5,-2.5){$0$}
      \put(4.5,-2.5){$1$}
      \put(9.5,-2.5){$2$}
      \put(14.5,-2.5){$3$}
      \put(19.5,-2.5){$4$}
      \put(24.5,-2.5){$5$}
      \put(30,-2.5){$x$}
      \put(-2.5,4.5){$1$}
      \put(-2.5,9.5){$2$}
      \put(-2.5,15){$y$}
    \end{pspicture}
  \end{center}

\medskip

A classical forward analysis \cite{BLR05} is considered here, by computing the reachable states (i.e. $location \times clock\ values$) from the initial one (where $x=y=0$). Then, we build the corresponding zones, each zone being represented by a DBM~; here, we have an infinite yet countable set of DBM as follows. Note that in this example $\prec$ is always $\leq$~; therefore, we will omit it in the matrices.
\[
\left\lbrace
\begin{array}{rl}
\vspace{-.15cm}
 & \begin{array}{c}
	\begin{array}{r}\hspace{-.3cm}_0\end{array}
	\begin{array}{r}\hspace{.43cm}_x\end{array}
	\begin{array}{r}\hspace{.25cm}_y\end{array}
\end{array} \vspace{.15cm} \\
\hspace{-.3cm} \begin{array}{r}
	\begin{array}{r} _0\\ _x\\ \vspace{.15cm}_y\end{array}
\end{array} & \hspace{-.5cm} \begin{pmatrix} 0 & -i & 0 \\ i+1 & 0 & i \\ 1 & -i & 0 \end{pmatrix}\\ 
\end{array} \hspace{-.15cm}
\right\rbrace_{i\geq 0}
\]

In order to make the state-space computable, abstraction techniques are used to get a finite number of zones. The abstraction being used in most model-checkers is based on maximum constants~: a clock $c$'s valuation is considered equal to $\infty$ as soon as it exceeds the maximal constant to which $c$ is ever compared. On the example, if a guarded transition $x\geq 10^6$ leads to another state, then the clock diagram becomes as follows~:

  \begin{center}
    \psset{unit=4pt}
    \begin{pspicture}(36,17)
      \psline{->}(0,0)(33,0)
      \psline{->}(0,0)(0,15)
	\pspolygon[linewidth=1pt,fillstyle=solid,fillcolor=black](20.1,0)(30,0)(30,5.1)(25,5.1)
      \psset{linewidth=1.5pt}
      \psline(0,0)(5,5)
      \psline(5,0)(10,5)
      \psline(10,0)(15,5)
      \psline(20,0)(25,5)
      \psset{linestyle=dotted,linewidth=1pt}
      \psline(5,0)(5,15)
      \psline(10,0)(10,15)
      \psline(15,0)(15,15)
      \psline(20,0)(20,15)
      \psline(25,0)(25,15)
      \psline(0,5)(30,5)
      \psline(0,10)(30,10)
      \put(-2.5,-2.5){$0$}
      \put(4.5,-2.5){$1$}
      \put(9.5,-2.5){$2$}
      \put(15,-2.5){$\cdots$}
      \put(19,-2.5){$10^6$}
      \put(33,-2.5){$x$}
      \put(-2.5,4.5){$1$}
      \put(-2.5,9.5){$2$}
      \put(-2.5,15){$y$}
    \end{pspicture}
  \end{center}

\smallskip

More formally, this abstraction yields the following set of DBM~:\\
\[
\hspace{-.1cm}
\left\lbrace
\hspace{-.25cm}
\begin{array}{ll}
\left\lbrace
\begin{array}{rl}
\vspace{-.15cm}
 & \begin{array}{c}
	\begin{array}{r}\hspace{-.45cm}_0\end{array}
	\begin{array}{r}\hspace{.4cm}_x\end{array}
	\begin{array}{r}\hspace{.25cm}_y\end{array}
\end{array} \vspace{.15cm} \\
\hspace{-.5cm} \begin{array}{r}
	\begin{array}{r} _0\\ _x\\ \vspace{.15cm}_y\end{array}
\end{array} & \hspace{-.7cm} \begin{pmatrix} 0 & -i & 0 \\ i+1 & 0 & i \\ 1 & -i & 0 \end{pmatrix}\\
\end{array} \hspace{-.2cm}
\right\rbrace_{\hspace{-.13cm} 0 \leq i \leq 10^6},
&
\hspace{-.2cm}
\begin{array}{rl}
\vspace{-.15cm}
 & \begin{array}{c}
	\begin{array}{r}\hspace{-.45cm}_0\end{array}
	\begin{array}{r}\hspace{.4cm}_x\end{array}
	\begin{array}{r}\hspace{.25cm}_y\end{array}
\end{array} \vspace{.15cm} \\
\hspace{-.5cm} \begin{array}{r}
	\begin{array}{r} _0\\ _x\\ \vspace{.15cm}_y\end{array}
\end{array} & \hspace{-.7cm} \begin{pmatrix} 0 & \infty & 0 \\ \infty & 0 & \infty \\ 1 & -10^6 & 0 \end{pmatrix}\\ 
\end{array} \hspace{-.15cm}
\end{array} \hspace{-.4cm}
\right\rbrace
\]

This set of DBM is finite, but remains huge~: $10^6+2$ matrices need to be computed and memorized, which seems exaggerated, a fortiori for such a simple example. In \cite{BBFL-tacas-2003}, a more elaborate abstraction is proposed~: the clocks' maximal constants are no more global to the system, but location-dependent. Another abstraction technique is proposed in \cite{BBLP-STTT05}, distinguishing between upper and lower bounds within maximal constants. To the best of our knowledge, these are the only zone-based abstraction techniques~; in each of them, the number of DBM still heavily depends on maximal constants.\\

Writing here such an infinite or huge number of DBM would have been impossible~; therefore, we naturally used a parametric representation of these DBM. Actually, this idea is also used by Constrained Parametric DBM (CPDBM) \cite{AAB00}, which is the data structure implemented in the \textsc{TReX} \cite{TReX} model-checker. CPDBM are indeed a more expressive version of DBM, extended in two steps. First, we consider PDBM, in which $c_{i,j}$ constants become $t_{i,j}$ arithmetical terms (the parameters). Such arithmetical terms $t$ are given by the grammar $t ::= 0 \mid 1 \mid x \mid t-t \mid t+t \mid t*t$, where $x$ belongs to a set $\mathcal{X}$ of real variables. Second, a PDBM becomes a CPDBM as terms are constrained by quantifier-free first-order formulas $\phi$. Such formulas are defined by $\phi ::= t\leq t \mid \neg \phi \mid \phi \vee \phi \mid Is\_int(t)$ (where the predicate $Is\_int(t)$ is true iff $t$ is an integer). Each of the two sets of matrices hereinabove is in fact a single CPDBM.\\

\def\lishape{
\psset{unit=1pt} 
\begin{pspicture}(10,10)
 \psline(0,0)(8,8)
 \psset{linestyle=dotted} 
 \psline(8,8)(8,0)
 \psline(8,0)(0,0)
 \psline(0,0)(0,8)
 \psline(0,8)(8,8)
\end{pspicture}}

\def\trshape{
\psset{unit=1pt} 
\begin{pspicture}(10,10) 
 \pspolygon[fillstyle=solid,fillcolor=black](0,0)(8,0)(8,8)
 \psset{linestyle=dotted} 
 \psline(0,0)(0,8)
 \psline(0,8)(8,8)
\end{pspicture}}

\def\sqshape{
\psset{unit=1pt} 
\begin{pspicture}(10,10)
 \pspolygon[fillstyle=solid,fillcolor=black](0,0)(8,0)(8,8)(0,8)
\end{pspicture}}

Consider now another way to represent the set of reachable clock values. On the second diagram showing the abstraction, we can see an obvious regular pattern along $x$, defined by three shapes~: \lishape , \trshape , and \sqshape . We define each shape as follows~: \lishape $= \lbrace (x,y)\in[0,1]^2 \st x=y \rbrace$, \trshape $= \lbrace (x,y)\in[0,1]^2 \st x\geq y \rbrace$, and \sqshape $= \lbrace (x,y)\in[0,1]^2\rbrace$. If we want to represent the same set as the previous abstracted zones, but without DBM, we can express the periodicity of each pattern with integers. To formalize it, taking the union of the following three sums suffices~:
\begin{equation*}
\begin{aligned}
& \Bigl(\{0,\ldots ,10^6-1\} \times \{0\} + \lishape \Bigr) \\
\bigcup & \Bigl(\{10^6\} \times \{0\} + \trshape \Bigr) \\
\bigcup & \Bigl(\{10^6+1,\ldots ,\infty \} \times \{0\} + \sqshape \Bigr) \\
\end{aligned}
\end{equation*}

This latter symbolic representation of such a reachability set is much smaller than DBM. Indeed, representing zones with DBM implies memorizing a possibly huge number of matrices, depending on the maximal constant for the clocks (one million, in this example). However, by introducing integers to express periodicity, we can reduce the representation to three small combinations of intervals. Moreover, we can even get rid of the abstraction, so as to get an exact representation for the same cost. CPDBM also have these advantages, but are undecidable because of the multiplication. Hence, let us specify a little more what is our representation~: we take finite unions of reals, real numbers being decomposed as sums of integers and smaller reals (called decimals). These integers and reals can be defined using quantification, addition, and boolean operators.\\

Actually, our approach comes down to representing sets of real numbers by extracting their integer components~; the interesting point is that adding integers to real sets can simplify their representation and ease their handling. One might think that adding integers to such a first-order real logic would make it undecidable, but section \ref{sec:presburger} proves the opposite. Before that, we need to formalize our representation.\\

\subsection{Composing integers and reals}
\label{sec:composing}

\paragraph{Notations.}
The set $\left[0,1\right[$ is denoted by $\setD$ in the sequel. We also call a \emph{decimal} (number) any $d \in \setD$, and a \emph{decimal set} any $D \subseteq \setD$. We write $\vec{x}$ to denote a \emph{vector} $(x_1,\dots,x_n)$. Sometimes, in order to be concise, we use $\fo{\dots}$ to denote the sets represented by this first-order logic. However, it does not make our statements incorrect, because we mostly discuss the expressive power of such logics.\\

Let $\rondZ \subseteq \partie{\setZ^n}$ and $\rondD \subseteq \partie{\setD^n}$ ; we will assume in this paper that we are using $n-$dimensional vectors, with $n \in \setN$. We denote by\footnote{The symbol $\uplus$ is sometimes used for the disjoint union, but we do not use such unions in this paper.} $\rondZ\uplus\rondD$ the class of real vectors $R \subseteq \setR^n$ s.t. $\dsp R=\bigcup_{i=1}^p(Z_i+D_i)$, with $(Z_i,D_i) \in \rondZ\times\rondD$ and $p\geq 1$. \\

Here are some examples of simple sets that might be often used, written as finite unions of sums of integers and decimals~:

\begin{exe}\label{ex:set}
  The empty set $\emptyset$ is written $\emptyset + \emptyset $. The set $\setR^n $ is written $\setZ^n + \setD^n$. The set $\setZ^n$ is written $\setZ^n+\{\vec{0}\}$.
\end{exe}

\begin{exe}\label{ex:eq}
The set $R_==\{\vec{r}\in\setR^2 \st r_1= r_2\}$ is written $\{\vec{z}\in \setZ^2\st z_1=z_2\} + \{\vec{d}\in \setD^2\st d_1=d_2\}$
\end{exe}

\begin{exe}\label{ex:leq}
  The set $R_\leq=\{\vec{r}\in\setR^2 \st r_1\leq r_2\}$ is written : 
\begin{align*}
& \{\vec{z}\in \setZ^2\st z_1\leq z_2\} + \{\vec{d}\in \setD^2\st d_1\leq d_2\}\\
 \bigcup &\{\vec{z}\in \setZ^2\st z_1<z_2\} + \{\vec{d}\in \setD^2\st d_1 > d_2\}
\end{align*}
\end{exe}

\begin{exe}\label{ex:sum}
  The set $R_+=\{\vec{r}\in\setR^3 \st r_1+r_2=r_3\}$ is written $\bigcup_{c\in \{0,1\} }\{\vec{z}\in \setZ^3\st z_1+z_2+c=z_3\} + \{\vec{d}\in \setD^3\st d_1+d_2=d_3+c\}$, where $c$ denotes a carry.
\end{exe}

The limits of our representation can be seen with the following counter-example\label{ex:infini}. Consider the set $\dsp R=\bigcup_{j=1}^{\infty} \Bigr(\lbrace j\rbrace + \left\lbrace \frac{1}{j+1}\right\rbrace \Bigl) $~; note that we use $j+1$ (and not simply $j$) to avoid the case where the decimal part is $\frac{1}{j}=1$ for $j=1$ (because it would not be a decimal, i.e. in $[0,1[$). Our representation can not deal with such a set~; indeed, despite the fact that it is a union of sums of integers and decimals, we can see that the union is inherently infinite. We insist on the finiteness of the union in our representation, mainly for implementability reasons~; this will be discussed in section \ref{sec:idf}.\\

Now, let us consider the stability of our representation. We prove\footnote{Here we have to take unions, depending on the number of dimensions, for a technical purpose~: the projection of a component in the vector.} that if  $\rondZ\subseteq\bigcup_{n\in\setN}\partie{\setZ^n}$ and $\rondD\subseteq \bigcup_{n\in\setN}\partie{\setD^n}$ are stable by the classical first order operations then the class $\rondZ\uplus\rondD=\bigcup_{n\in\setN}\rondZ_n\uplus\rondD_n$ where $\rondZ_n=\rondZ\cap\partie{\setZ^n}$ and $\rondD_n=\rondD\cap\partie{\setD^n}$ is also stable by these operations. The operations we consider are~: boolean combinations (union, intersection, difference), cartesian product, quantification, and reordering. We use the following definitions for these last two operations. First, quantification is done by projecting away variables from the considered vector~: $\forall R \subseteq \setR^n,\ \exists_i R = \{(r_1,\dots , r_{i-1}, r_{i+1}, \dots , r_n) \mid \exists r_i\ (r_1,\dots, r_{i-1}, r_i, r_{i+1}, \dots ,r_n)\in R\}$. Second, a reordering is a mere permutation function $\pi$ of the variables order in a vector~: $\forall R \subseteq \setR^n,\ \pi R = \{(r_{\pi (1)},\dots,r_{\pi (n)}) \mid (r_1,\dots,r_n) \in R\}$. Then, we introduce a generic definition for stability~:

\begin{dfn}
  A class $\rondR\subseteq \bigcup_{n\in\setN}\partie{\setR^n}$ is \emph{stable} if it is closed under boolean operations, cartesian product, quantification, and reordering.
\end{dfn}

Notice that taking the union of two such sets is trivial, as they are already unions of integer and decimal parts. Then, observe that $(Z_1+D_1)\cap (Z_2+D_2)=(Z_1\cap Z_2)+(D_1\cap D_2)$ for any $Z_1,Z_2\subseteq \setZ^n$ and for any $D_1,D_2\subseteq\setD^n$~; thus, the stability by union of $\rondZ_n\uplus\rondD_n$ provides the stability by intersection. From the equality $(Z_1+D_1)\moins (Z_2+D_2)=((Z_1\moins Z_2)+D_1)\cup (Z_1+(D_1\moins D_2))$ we get the stability by difference. The stability by cartesian product is provided by $(Z_1+D_1)\times (Z_2+D_2)=(Z_1\times Z_2)+(D_1\times D_2)$. The stability by projection comes from $\exists_i R=(\exists_i Z)+(\exists_i D)$, where $R=Z+D$. Finally, the stability by reordering is obtained thanks to $\pi(Z+D)=(\pi Z)+(\pi D)$.
We have proved the following proposition, which is later used in the proofs of theorem \ref{thm:pres} and proposition \ref{prp:rva}~: 
\begin{prp}[Stability]\label{prp:stable}
The class $\rondZ\uplus\rondD$ is stable if $\rondZ$ and $\rondD$ are stable.
\end{prp}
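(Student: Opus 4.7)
The plan rests on a single algebraic observation: every $\vec{r} \in \setR^n$ admits a \emph{unique} decomposition $\vec{r} = \vec{z} + \vec{d}$ with $\vec{z} \in \setZ^n$ and $\vec{d} \in \setD^n$ (coordinate-wise floor). Consequently, for any $Z \subseteq \setZ^n$ and $D \subseteq \setD^n$, one has $\vec{r} \in Z+D$ if and only if the integer part of $\vec{r}$ lies in $Z$ \emph{and} its decimal part lies in $D$. This converts every set-theoretic operation on sums $Z+D$ into an operation that acts independently on the integer side and on the decimal side.

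Given this, I would proceed in two moves. First, a generic element of $\rondZ \uplus \rondD$ is already a finite union $\bigcup_i (Z_i + D_i)$, so closure under binary union is immediate by concatenation. For the remaining operations I reduce to single summands using distributivity over $\cup$, and for each single-summand case I write down the key identity: $(Z_1{+}D_1) \cap (Z_2{+}D_2) = (Z_1 \cap Z_2) + (D_1 \cap D_2)$ for intersection; $(Z_1{+}D_1) \moins (Z_2{+}D_2) = ((Z_1 \moins Z_2) + D_1) \cup (Z_1 + (D_1 \moins D_2))$ for difference; $(Z_1{+}D_1) \times (Z_2{+}D_2) = (Z_1 \times Z_2) + (D_1 \times D_2)$ for cartesian product; $\exists_i(Z{+}D) = (\exists_i Z) + (\exists_i D)$ for projection; and $\pi(Z{+}D) = (\pi Z) + (\pi D)$ for reordering. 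In every case the right-hand side visibly lies in $\rondZ \uplus \rondD$, because $\rondZ$ and $\rondD$ are assumed stable under exactly these operations.

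The only spot with genuine content is the difference identity, where uniqueness of the $\setZ \!+\! \setD$ decomposition is indispensable: without it, $\vec{r}$ could belong to $Z_2 + D_2$ via a decomposition whose integer part lies outside $Z_1$, breaking the clean split of the right-hand side into the two cases ``integer part not in $Z_2$'' and ``decimal part not in $D_2$''. A second small check, for projection, is that the existential quantifier on a real coordinate decouples into independent quantifiers on the integer and decimal components, which is fine because any $\vec{z} \in \setZ$ paired with any $\vec{d} \in \setD$ yields a legitimate real. Everything else is mechanical distribution, so the main (minor) obstacle is simply verifying these five identities cleanly.
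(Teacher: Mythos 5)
Your proof is correct and follows essentially the same route as the paper: the paper establishes stability via exactly the same five identities (for intersection, difference, cartesian product, projection, and reordering), distributed over the finite unions. Your explicit appeal to the uniqueness of the integer/decimal decomposition of a real vector is a justification the paper leaves implicit, but it is the same argument.
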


\section{First-order additive logic over integers and reals}\label{sec:presburger}

Using at the same time integers and reals in the whole arithmetic is known to be undecidable. However, when multiplication is left apart, the first-order additive logic is decidable ; its decidability has been suggested by B{\"u}chi, then proved by \cite{BRW98} with automata and by \cite{Weispfenning99} using quantifier elimination. Actually, it can be seen as the Presburger logic \cite{Presburger} extended to the reals. This first-order logic $\fo{\setR,\setZ,+,\leq}$ can encode complex linear constraints combining both integral and real variables. In this section we prove that sets definable in this logic can be decomposed into finite unions of $Z+R$ where $Z$ is definable in $\fo{\setZ,+,\leq}$ and $R$ is definable in $\fo{\setD,+,\leq}$. This result proves that complex linear constraints combining integral and real variables can be decomposed into linear constraints over integers, and linear constraints over reals. More precisely, we prove the following decomposition~:

\medskip

\begin{thm}\label{thm:pres}
  $\fo{\setR,\setZ,+,\leq}=\fo{\setZ,+,\leq}\uplus\fo{\setD,+,\leq}$.
\end{thm}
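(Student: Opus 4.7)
The plan is to prove both inclusions of the set equality separately; the forward direction $\fo{\setZ,+,\leq}\uplus\fo{\setD,+,\leq}\subseteq\fo{\setR,\setZ,+,\leq}$ is routine, while the converse carries all of the content. For the forward inclusion I would translate a representative $\bigcup_{i=1}^{p}(Z_i+D_i)$ into a formula of the larger logic by (i) relativising the Presburger formula defining each $Z_i$ by the unary predicate $\setZ$, (ii) relativising the $\fo{\setD,+,\leq}$-formula defining $D_i$ by the open condition $0\leq x<1$ (the constant $1$ is available as it is an integer), and (iii) expressing the Minkowski sum via an existential quantification over an integer part $\vec{z}$ and a fractional part $\vec{d}$ subject to $\vec{r}=\vec{z}+\vec{d}$. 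A finite disjunction over $i$ then yields $R$.

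For the converse inclusion I would proceed by structural induction on a defining formula $\phi(\vec{r})$ of $\fo{\setR,\setZ,+,\leq}$, interpreting Boolean connectives and quantifiers as the set-theoretic operations union, intersection, complement inside $\setR^n$, and projection. The class $\fo{\setZ,+,\leq}\uplus\fo{\setD,+,\leq}$ is stable because $\fo{\setZ,+,\leq}$ and $\fo{\setD,+,\leq}$ are each stable; this is exactly the conclusion of Proposition \ref{prp:stable} and gives all the inductive steps for free (complementation is handled by combining difference-stability with the fact that $\setR^n$ itself lies in the class, as recorded in Example \ref{ex:set}). For the base case of the induction, the four explicit decompositions in Examples \ref{ex:set}, \ref{ex:eq}, \ref{ex:leq} and \ref{ex:sum} place $\setZ^n$, $R_{=}$, $R_{\leq}$ and $R_{+}$ directly in the class. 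A general atomic formula of the logic is a linear constraint $\sum a_i r_i \mathbin{\sim} b$ with integer $a_i,b$ and ${\sim}\in\{=,\leq\}$ (or an application of the $\setZ$ predicate to such a term); such a set can be built from $R_{+}$, $R_{\leq}$, $\setZ^n$ and the singleton $\{b\}$ (itself available by Example \ref{ex:set}) through iterated Cartesian products, intersections and projections, all of which are stable operations.

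The step I expect to require the most care is the reduction of a general atomic formula to the four basic examples: one must verify that iterated application of $R_{+}$ encodes integer scalar multiplication, that subtraction and the constants $b\in\setZ$ can be synthesised in the class, and that the $\setZ$ predicate applied to a compound term is captured by intersecting the set defined by the term with the copy of $\setZ^n$. Once this book-keeping is discharged, the inductive step is mechanical, since each formula connective is matched by one of the operations listed in the definition of stability, and Proposition \ref{prp:stable} closes the argument.
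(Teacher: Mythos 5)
Your proposal is correct and follows essentially the same route as the paper: the easy direction is handled by direct translation into $\fo{\setR,\setZ,+,\leq}$, and the converse is obtained by placing the atomic predicates and the graph of addition in the class via Examples \ref{ex:set}--\ref{ex:sum} and then invoking the stability of $\fo{\setZ,+,\leq}\uplus\fo{\setD,+,\leq}$ (Proposition \ref{prp:stable}) to close under all first-order operations. Your additional book-keeping on reducing general atomic linear constraints to the four basic examples is a correct elaboration of a step the paper leaves implicit, not a different argument.
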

\begin{proof}
  First of all, observe that any set definable in the logic $\fo{\setZ,+,\leq}\uplus\fo{\setD,+,\leq}$ is also definable in $\fo{\setR,\setZ,+,\leq}$. Conversely, the sets $\setR$ and $\setZ$, the function $+:\setR\times\setR\rightarrow\setR$ and the predicate $\leq$ are definable in $\fo{\setZ,+,\leq}\uplus\fo{\setD,+,\leq}$ from examples \ref{ex:set}, \ref{ex:eq}, \ref{ex:leq}, \ref{ex:sum}. Thus, stability by first order operations provides the inclusion $\fo{\setR,\setZ,+,\leq}\subseteq \fo{\setZ,+,\leq}\uplus\fo{\setD,+,\leq}$. We deduce the equality.
\end{proof}


Now, let us recall that sets definable in the Presburger logic $\fo{\setZ,+,\leq}$ can be characterized thanks to \emph{linear sets} \cite{GinSpa66}. In fact, a set $Z\subseteq\setZ^n$ is definable in this logic if and only if it is equal to a finite union of linear sets $\vec{b}+P^*$ where $\vec{b}\in\setZ^n$, $P$ is a finite subset of $\setZ^n$, and $P^*$ denotes the set of finite sums $\sum_{i=1}^k p_i$ with $p_1,\ldots,p_k\in P$ and $k\in\setN$. This geometrical characterization can be extended to the class of sets definable in $\fo{\setZ,+,\leq}\uplus\fo{\setD,+,\leq}$ by introducing the class of \emph{polyhedral convex sets}. A set $C\subseteq\setR^n$ is said \emph{polyhedral convex} if $C$ is defined by a finite conjunction of formulas $\scalar{\alpha}{\vec{x}}\prec c$ where $\alpha\in\setZ^n$, $\prec\in\{\leq,<\}$ and $c\in\setZ$. Recall that a \emph{Fourier-Motzkin quantification elimination} proves that a set $C\subseteq\setR^n$ is definable in $\fo{\setR,+,\leq}$ if and only if it is equal to a finite union of polyhedral convex sets. In \cite{FL08}, the authors have proved the following geometrical characterization : \\
\textit{A set $R\subseteq \setR^n$ is definable in $\fo{\setR,\setZ,+,\leq}$ if and only if it is equal to a finite union of sets of the form $C+P^*$ where $C\subseteq\setR^n$ is a polyhedral convex set and $P$ is a finite subset of $\setZ^n$.}


\subsection{Decomposing DBM-based representations}\label{sec:cpdbm}

In this section, we characterize an extension of DBM. We denote by $\bigcup\textrm{DBM}_\setD$ the finite unions of DBM sets which are included in $\setD^n$. Notice that $\bigcup\textrm{DBM}_\setD$ is stable by first order operations, thanks to a Fourier-Motzkin quantifier elimination.

\medskip

A CP-DBM$_L$ is a DBM where the vector $\vec{c}$ is no longer a constant, but a vector of parameters constrained by a formula $\phi (\vec{c})$ defined in a logic $L$. More precisely, a CP-DBM$_L$ is a tuple $(\phi,\vec{\prec})$ representing a set $R_{\phi,\vec{\prec}}$ s.t. :
$$R_{\phi,\vec{\prec}}= \bigcup_{\vec{c} \models \phi} R_{\vec{c},\vec{\prec}} $$ As introduced in \cite{AAB00}, CPDBM correspond to CP-DBM$_L$ where $L$ is the first-order arithmetic without quantifiers ; in particular, multiplication is allowed in this formalism. In this section, we study another variation of DBM : CP-DBM$_+$, which is CP-DBM$_L$ where $L$ is the decidable Presburger logic $\fo{\setZ,+,\leq}$. That is, CP-DBM$_+$ are CPDBM with quantifiers but without multiplication. We denote by $\bigcup\textrm{CP-DBM}_+$ the finite unions of $R_{\vec{\phi},\vec{\prec}}$, i.e. finite unions of CP-DBM$_+$ sets.\\

We show that finite unions of CP-DBM$_+$ sets are in fact a combination of Presburger-definable sets and finite unions of DBM decimal sets : 
\begin{prp}\label{prp:cpdbm}
We have 
$\bigcup\textrm{CP-DBM}_+=\fo{\setZ,+,\leq}\uplus \bigcup \textrm{DBM}_\setD$.
\end{prp}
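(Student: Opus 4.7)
The plan is to prove the two inclusions separately, after decomposing each $\vec{r}\in\setR^n$ uniquely as $\vec{r}=\vec{z}+\vec{d}$ with $\vec{z}=\lfloor\vec{r}\rfloor\in\setZ^n$ and $\vec{d}\in\setD^n$.

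For $\fo{\setZ,+,\leq}\uplus\bigcup\textrm{DBM}_\setD\subseteq \bigcup\textrm{CP-DBM}_+$, I would take a sum $Z+D$ with $Z$ Presburger and $D=\bigcup_k D_k$ a finite union of DBMs included in $\setD^n$. Since $Z+D=\bigcup_k(Z+D_k)$, it suffices to handle a single decimal DBM $D_k$, given by integer constants $(e_{i,j})$ and comparators $(\prec^d_{i,j})$. Then $Z+D_k$ is exactly the CP-DBM$_+$ whose comparators are the $\prec^d_{i,j}$ and whose parameter vector $\vec{c}$ is constrained by the Presburger formula $\phi(\vec{c})\equiv \exists \vec{z}\in Z\ \bigwedge_{i,j}c_{i,j}=(z_i-z_j)+e_{i,j}$ (with the fictive $z_0=0$); the clipping to $\vec{d}\in\setD^n$ is absorbed in the defining inequalities of $D_k\subseteq\setD^n$.

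For the converse, given $R_{\phi,\vec{\prec}}$, I would partition $\setD^n$ into finitely many \emph{order regions} $D_\pi$, one for each total preorder $\pi$ on $\{0,d_1,\dots,d_n\}$; each $D_\pi$ is itself a DBM included in $\setD^n$. Inside $\pi$ the carry pattern is constant: there are integers $\epsilon^\pi_{i,j}\in\{0,1\}$ with $d_i-d_j+\epsilon^\pi_{i,j}\in[0,1)$, and the region also fixes whether the boundary $d_i-d_j+\epsilon^\pi_{i,j}=0$ is attained. A short case analysis on $\prec_{i,j}\in\{\leq,<\}$ and on this boundary case rewrites each real constraint $r_i-r_j\prec_{i,j}c_{i,j}$ as a Presburger inequality relating $z_i-z_j$ and $c_{i,j}$. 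Conjoining these rewrites with $\phi(\vec{c})$ and existentially eliminating $\vec{c}$ yields a Presburger set $Z_\pi\subseteq\setZ^n$, so $R_{\phi,\vec{\prec}}=\bigcup_\pi(Z_\pi+D_\pi)$ lies in $\fo{\setZ,+,\leq}\uplus\bigcup\textrm{DBM}_\setD$.

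The main obstacle is this last collapse step, where the passage from a real constraint on $\vec{r}$ to an integer constraint on $\vec{z}$ interlocks three pieces of information: the sign of $d_i-d_j$, the boundary case $d_i-d_j+\epsilon^\pi_{i,j}=0$ (which swaps $\leq$ for $<$ on the integer side), and the strictness of $\prec_{i,j}$ itself. Choosing the order-region partition freezes all three on each $D_\pi$ and makes the rewriting uniform, at the price of an exponential number of pieces that one expects to be inherent.
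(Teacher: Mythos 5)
Your proposal is correct and follows essentially the same route as the paper: the first inclusion encodes $Z+D$ by the parameter constraint $c_{i,j}=z_i-z_j+e_{i,j}$ with $\vec{z}$ ranging over $Z$, and the converse partitions $\setD^n$ into finitely many DBM regions on which the integer rewriting of each constraint $r_i-r_j\prec_{i,j}c_{i,j}$ is uniform, then eliminates $\vec{c}$ inside Presburger. The only cosmetic difference is that you partition by the full order type of $\{0,d_1,\dots,d_n\}$, whereas the paper uses the coarser partition indexed by the $0/1$ matrices recording the truth values of $d_i-d_j\prec_{i,j}0$.
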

\begin{proof}
Let us first prove the inclusion $\supseteq$. Let us consider a DBM $(\vec{c},\vec{\prec})$ denoting a set $D\subseteq\setD^n$ and a Presburger formula $\psi(\vec{x})$ denoting a set $Z\subseteq\setZ^n$ and let us prove that $Z+D$ is a $\bigcup\textrm{CP-DBM}_+$ set. Observe that $\vec{r}\in Z+D$ if and only if there exists $\vec{z}\in Z$ such that $\vec{r}-\vec{z}\in D$. The condition $\vec{r}-\vec{z}\in D$ is equivalent to $\bigwedge_{0\leq i,j\leq n}r_i-r_j\prec_{i,j} c_{i,j}+z_i-z_j$. Let us consider the Presburger formula $\psi(\vec{p}):=\exists \vec{z}\in\setZ^n\ p_{i,j}=c_{i,j}+z_i-z_j$ and observe that $R_{\psi,\vec{\prec}}=Z+D$. We have proved the inclusion $\supseteq$.

\smallskip

For the converse inclusion, let us consider a $\textrm{CP-DBM}_+$ set $R_{\vec{\phi},\vec{\prec}}$. Let $Z_{\vec{d}}=\setZ^n\cap (R_{\vec{\phi},\vec{\prec}}-\vec{d})$ indexed by $\vec{d}\in\setD^n$. Observe that $Z_{\vec{d}}$ is actually the following set of vectors :
  $$Z_{\vec{d}} = \bigcup_{\vec{c} \models \phi} \left\lbrace \vec{z} \in \setZ^n \vert \bigwedge_{0\leq i,j\leq n} z_i - z_j \prec_{i,j} c_{i,j} + (d_j - d_i) \right\rbrace$$
  Since $d_j-d_i\in\left]-1,1\right[$ and $z_i-z_j,c_{i,j}\in\setZ$ we deduce that $z_i-z_j\prec_{i,j}c_{i,j}+(d_j-d_i)$ is equivalent to $z_i-z_j\leq c_{i,j}$ if $d_i-d_j \prec_{i,j}0$ and it is equivalent to $z_i-z_j\leq c_{i,j}-1$ otherwise. Given a matrix $\vec{m}=(m_{i,j})_{0\leq i,j\leq n}$ such that $m_{i,j}\in\{0,1\}$ for any $0\leq i,j\leq n$, we denote by $I_{\vec{m}}$ and $D_{\vec{m}}$ the following sets:
 \begin{align*}
 I_{\vec{m}}&=\{ \vec{z}\in\setZ^n \mid \exists \vec{c}\ \phi(\vec{c})\wedge\hspace{-0.2cm}\bigwedge_{0\leq i,j\leq n}z_i-z_j\leq c_{i,j}-m_{i,j} \}\\
 D_{\vec{m}}&=\{\vec{d}\in\setD^n \mid \bigwedge_{0\leq i,j\leq n} (d_i-d_j\prec_{i,j}0 \Longleftrightarrow m_{i,j}=0)\}
\end{align*}
 Note that $D_{\vec{m}}$ is a DBM set and $Z_{\vec{d}}=I_{\vec{m}}$ for any $\vec{d}\in D_{\vec{m}}$. From $\bigcup_{\vec{m}}D_{\vec{m}}=\setD^n$ we deduce that $R_{\vec{\phi},\vec{\prec}}=\bigcup_{\vec{d}\in\setD^n}Z_{\vec{d}}+\{\vec{d}\}=\bigcup_{\vec{m}}I_{\vec{m}}+D_{\vec{m}}$. We have proved that $R_{\vec{\phi},\vec{\prec}}$ is definable in $\fo{\setZ,+,\leq} \uplus \bigcup\textrm{DBM}_\setD$.
\end{proof}



\section{Beyond Presburger}\label{sec:rva}

We have just shown our decomposition to be working on $\fo{\setR,\setZ,+,\leq}$ and below. Now, we prove that it can also be used on more expressive logics. We take the example of Real Vector Automata (RVA) \cite{BRW98}, which is, to the best of our knowledge, the most expressive decidable implemented representation for sets of real and integer vectors. RVA are used in the tool \textsc{LASH} \cite{BJW01,BJW03}. In this section, the class of sets representable by RVA is proved decomposable into our formalism.
\medskip

Let $b\geq 2$ be an integer called the \emph{basis of decomposition}. We denote by $\Sigma_b=\{0,\ldots,b-1\}$ the finite set of \emph{digits} and by $S_b=\{0,b-1\}$ the set of \emph{sign digits}. An infinite word $\sigma=\vec{s}\vec{a}_1\ldots\vec{a}_k\star\vec{a}_{k+1}\vec{a}_{k+2}\ldots$ over the alphabet $\Sigma_b^n\cup\{\star\}$ is said \emph{$b$-correct} if $\vec{s}\in S_b^n$ and $\vec{a}_i\in\Sigma_b^n$ for any $i\geq 1$. In this case, $\sigma$ is called a \emph{most significant digit first decomposition} of the following real vector $\rho_b(\sigma)\in \setR^n$:
  $$\rho_b(\sigma)=b^k\left(\frac{\vec{s}}{1-b}+\sum_{i\geq 1}b^{-i}\vec{a}_i\right)$$
A \emph{Real Vector Automaton (RVA)} in basis $b$ is a B{\"u}chi automaton $\automaton$ over the alphabet $\Sigma_b^n\cup\{\star\}$ such that the language $\Lan{\automaton}$ recognized by $\automaton$ contains only $b$-correct words. The set $\inter{\automaton}$ represented by $\automaton$ is defined by $\inter{\automaton}=\{\rho_b(\sigma) \st \sigma\in\Lan{\automaton}\}$. A set $R\subseteq\setR^n$ is said \emph{$b$-recognizable} if there exists a RVA $\automaton$ in basis $b$ such that $R=\inter{\automaton}$. 

\medskip

According to \cite{BRW98}, the class of $b$-recognizable sets can be logically characterized by $\fo{\setR,\setZ,+,\leq,X_b}$ where $X_b$ is an additional predicate. The predicate $X_b$ over $\setR^3$ is such that $X_b(x,u,a)$ is true if and only if there exists a most significant digit first decomposition $\sigma=s a_1\ldots a_k\star a_{k+1}\ldots$ of $x$ and an integer $i\in\setN$ such that $a_i=a$ and $u=b^{k-i}$. 
\begin{thm}\cite{BRW98}
  A set $R\subseteq\setR^n$ is $b$-recognizable if and only if it is definable in $\fo{\setR,\setZ,+,\leq,X_b}$. 
\end{thm}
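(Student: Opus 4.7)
The plan is to prove the two directions separately, following the standard Büchi-style equivalence between an automaton model on infinite words and a monadic/first-order theory with a digit predicate.

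For the direction ($\Leftarrow$), that every set definable in $\fo{\setR,\setZ,+,\leq,X_b}$ is $b$-recognizable, the strategy is to show that the class of $b$-recognizable subsets of $\setR^n$ (for $n$ varying) is closed under the logical operations of first-order logic and contains interpretations of every atomic symbol of the signature. Closure under union, intersection, complementation and projection follows from the corresponding closure properties of Büchi automata (the nontrivial ingredient being Büchi complementation, lifted to the sub-alphabet of $b$-correct words, and a careful treatment of projection so as to preserve $b$-correctness, e.g. by padding with leading sign digits). It then remains to exhibit a RVA for each of the sets $\setR$, $\setZ$, the graph of $+$, the graph of $\leq$, and the predicate $X_b$. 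The first four are classical and easy: $\setR$ corresponds to all $b$-correct words, $\setZ$ to those whose tail after $\star$ is ultimately $0$ (or ultimately $b-1$, to cover the dual expansion), addition is computed digit by digit with a finite carry, and $\leq$ is checked by comparing the most significant digit where the two words differ. The predicate $X_b(x,u,a)$ is inherently automata-friendly: one guesses the position $i$ in $x$ at which the digit is $a$, tracks via the position of $\star$ that $u=b^{k-i}$, and enforces that $u$'s expansion is a single digit $1$ with the right offset. By induction on the formula, every definable set is $b$-recognizable.

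For the direction ($\Rightarrow$), given a RVA $\automaton$ in basis $b$ over $\setR^n$, the strategy is to write a formula of $\fo{\setR,\setZ,+,\leq,X_b}$ that says ``there exists a $b$-correct word $\sigma$ accepted by $\automaton$ with $\rho_b(\sigma)=\vec{x}$.'' The predicate $X_b$ is exactly what gives us access to the $i$-th digit of each component of $\vec{x}$. One introduces existentially quantified integer variables to guess, for each state $q$ of $\automaton$, the (finite) set of positions at which a run visits $q$; this can be encoded using finitely many auxiliary digit-predicates via $X_b$ over additional real variables serving as characteristic sequences. Transition compatibility between consecutive positions is a finite disjunction of digit equalities, expressed with $X_b$; the Büchi acceptance condition ``some accepting state is visited infinitely often'' is expressed by asserting that the corresponding characteristic sequence is not ultimately $0$, which is again writable with $X_b$ and $+$. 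Alignment between the $n$ components and the unique position of the $\star$ marker is enforced by a single existentially quantified integer $k$, the length of the integer part.

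The main obstacle is the careful bookkeeping in the ($\Rightarrow$) direction: dual representations (the number $1$ has two expansions $1.000\ldots$ and $0.\overline{b-1}$), enforcing $b$-correctness so that projection preserves the class, and faithfully encoding an infinite Büchi run inside a first-order theory whose only access to digit sequences is through the ternary predicate $X_b$. Once the correct encoding scheme for runs as auxiliary real variables is fixed, the transition relation and acceptance condition translate routinely. Since this theorem is the already-published result of \cite{BRW98}, our proposal mainly serves to indicate the ingredients; we do not reproduce the full construction here, as our use of the theorem in the sequel is entirely black-box.
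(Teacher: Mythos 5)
The paper offers no proof of this theorem: it is imported verbatim from \cite{BRW98}, so there is nothing internal to compare your argument against. Your sketch correctly assembles the standard ingredients of that published proof (closure of B\"uchi automata under the first-order operations plus explicit automata for the atomic predicates in one direction; encoding accepting runs as characteristic digit sequences of auxiliary real variables accessed through $X_b$ in the other), up to one slip --- the set of positions at which a run visits a given state is in general infinite, not finite, although the characteristic-sequence encoding you then describe handles infinite position sets anyway.
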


\medskip

In order to provide a decompostion of $\fo{\setR,\setZ,+,\leq,X_b}$, the predicate $X_b$ is proved expressible by two valuation functions $V_b$ and $W_b$ where~:
\begin{itemize}
\item $V_b:\setZ\moins\{0\}\rightarrow\setZ$ is the \emph{integer valuation function} introduced in \cite{Bruyere94} and defined by $V_b(z)=b^j$, where $j\in\setZ$ is the greatest integer such that $b^{-j}z\in\setZ$.
\item $W_b:\setD\moins\{0\}\rightarrow \setD$ is the \emph{decimal valuation function} defined by $W_b(d)=b^j$, where $j\in\setZ$ is the least integer such that $b^{-j}d\not\in\setD$.
\end{itemize}

By expressing $X_b$ in $\fo{\setR,\setZ,+,\leq,V_b,W_b}$ and $V_b,W_b$ in $\fo{\setR,\setZ,+,\leq,X_b}$ we deduce that $\fo{\setR,\setZ,+,\leq,X_b}=\fo{\setR,\setZ,+,\leq,V_b,W_b}$. Finally, from proposition \ref{prp:stable} and theorem \ref{thm:pres}, we get the following proposition.
\begin{prp}\label{prp:rva}
  $\fo{\setR,\setZ,+,\leq, X_b}=\fo{\setZ,+,\leq, V_b}\uplus \fo{\setD,+,\leq, W_b}$.
\end{prp}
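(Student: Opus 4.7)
The plan is to factor the proof through the intermediate logical equivalence
$$\fo{\setR,\setZ,+,\leq,X_b}=\fo{\setR,\setZ,+,\leq,V_b,W_b},$$
and then to mirror the proof of Theorem~\ref{thm:pres}, letting Proposition~\ref{prp:stable} do the work. Once the equivalence is in place, the examples in Section~\ref{sec:operateur} already show that $\setR$, $\setZ$, $+$ and $\leq$ are definable in $\fo{\setZ,+,\leq,V_b}\uplus\fo{\setD,+,\leq,W_b}$; the function $V_b$ operates on integers and so sits in the integer summand $\fo{\setZ,+,\leq,V_b}$ (its graph lives in $\fo{\setZ,+,\leq,V_b}\uplus\fo{\setD,+,\leq,W_b}$ via the summand $\{\vec{0}\}$ on the decimal side), and symmetrically $W_b$ sits in the decimal summand. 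Proposition~\ref{prp:stable} then gives that $\fo{\setZ,+,\leq,V_b}\uplus\fo{\setD,+,\leq,W_b}$ is stable under boolean operations, products, quantification and reordering, provided the two summand classes are themselves stable. An easy induction on formula structure therefore establishes the inclusion $\fo{\setR,\setZ,+,\leq,V_b,W_b}\subseteq\fo{\setZ,+,\leq,V_b}\uplus\fo{\setD,+,\leq,W_b}$, while the reverse inclusion is immediate since $V_b$ and $W_b$ are readable as partial real-valued functions on the left.

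To justify the intermediate logical equivalence, I would argue mutual definability. For "$X_b$ in $\fo{\setR,\setZ,+,\leq,V_b,W_b}$": a real $x$ is uniquely decomposed as $x=z+d$ with $z\in\setZ$ and $d\in\setD$, and this decomposition is definable in $\fo{\setR,\setZ,+,\leq}$. To evaluate $X_b(x,u,a)$ I would case-split on whether the position $u=b^m$ is integer ($m\ge 0$) or strictly decimal ($m<0$): in the first case the digit at position $u$ of $x$ is the digit at position $u$ of $z$, readable using $V_b$ with ordinary Presburger machinery on $\setZ$; in the second case it is the corresponding digit of $d$, readable using $W_b$ with additive arithmetic on $\setD$. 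For "$V_b,W_b$ in $\fo{\setR,\setZ,+,\leq,X_b}$": $V_b(z)=u$ for $z\in\setZ\setminus\{0\}$ is characterised by asserting that $u$ is a power of $b$, that $X_b(z,u,a)$ holds for some digit $a\neq 0$, and that $X_b(z,u',0)$ holds for every smaller positive power $u'$ of $b$; symmetrically $W_b(d)=u$ picks out the most significant nonzero digit of $d$'s base-$b$ expansion.

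The main obstacle is precisely this mutual definability step: one must handle the sign digit and the boundary $u=1$ between integer and decimal positions uniformly, and in each direction express "$u$ is a power of $b$" (which $X_b$ supplies via $X_b(1,u,1)$, and which $V_b$ supplies by $V_b(u)=u$). A subsidiary point to check is that $\fo{\setD,+,\leq,W_b}$ is stable; the corresponding integer statement for $\fo{\setZ,+,\leq,V_b}$ is Bruyère's classical result, and the decimal version should go through by analogous quantifier-elimination techniques (or by transferring along the natural map $z\mapsto z/b^k$), but it is worth verifying explicitly before invoking Proposition~\ref{prp:stable}.
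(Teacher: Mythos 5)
Your proposal follows essentially the same route as the paper: establish the intermediate equivalence $\fo{\setR,\setZ,+,\leq,X_b}=\fo{\setR,\setZ,+,\leq,V_b,W_b}$ by mutual definability of $X_b$ and the pair $V_b,W_b$, then conclude by combining Proposition~\ref{prp:stable} with Theorem~\ref{thm:pres}. You actually supply more detail than the paper does on the mutual-definability step and on checking stability of the extended summand classes, but the structure of the argument is the same.
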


Moreover, it is clear that the logic $\fo{\setZ,+,\leq,V_b}\uplus\fo{\setD,+,\leq,W_b}$ extends $\fo{\setZ,+,\leq}\uplus\fo{\setD,+,\leq}$. However, even if the function $W_b$ is crucial to logically characterize the class of $b$-recognizable sets, this predicate is not used in practice. In fact, in order to get efficient algorithms for manipulating B{\"u}chi automata (more precisely, minimization and determinization), we only consider sets $R\subseteq\setR^n$ that can be represented by a \emph{weak RVA} \cite{BJW01}. Recall that a B{\"u}chi automaton $\automaton$ is said \emph{weak} if any strongly connected component $S$ satisfies $S\subseteq F$ or $S\cap F=\emptyset$, where $F$ is the set of accepting states. Unfortunately, the class of sets $R\subseteq \setR^n$ representable by a weak RVA is not logically characterized since this class is not stable by first order operations (because of projection). In practice, since any set $R\subseteq\setR^n$ definable in $\fo{\setR,\setZ,+,\leq,V_b}$ can be represented by a weak RVA, the RVA symbolic representation is only used for representing sets in this logic (i.e. without $W_b$). Just remark that $\fo{\setR,\setZ,+,\leq,V_b}=\fo{\setZ,+,\leq,V_b}\uplus \fo{\setD,+,\leq}$. Finally, note that weak RVA are used in the tool \textsc{LIRA} \cite{LIRA}, whose benchmarks show very efficient computation times for sets defined in $\fo{\setR,\setZ,+,\leq}$.


\section{Towards an implementation}\label{sec:idf}


From an implementation perspective, our decomposition has been designed to fit \textsc{Genepi}'s requirements. \textsc{Genepi} \cite{genepi} is a modular framework supporting Presburger-based solvers and model-checkers, distributed under GNU Public License. Its core consists of a plugin manager, which computes generic operations (such as boolean operations, quantification, satisfiability) on sets encoded as the solutions of Presburger-like formulas. Different implementations of these operations can be used as plugins~; existing ones include \textsc{PresTAF, LIRA, LASH, MONA, OMEGA,} and \textsc{PPL}. We have begun to design a plugin for our decomposition, which uses two existing plugins : one for the integer part, and one for the decimal part.\\

Once this plugin is ready, any combination of two other plugins is possible : for example, one could try \textsc{PresTAF} over integers and \textsc{PPL} over decimals. One could even be curious and study the efficiency of two instances of \textsc{LIRA} plugins, each one working on its own part (integer or decimal). Another benefit, coming from the new decomposition of RVA, would be to use the \textsc{LASH} plugin only on one part, and manage the other one differently : this might improve the effectiveness of RVA, which are very expressive but not really efficient in practice. So far, our first tests on small conjunctions of linear constraints show execution times close to the ones of \textsc{LIRA}.\\

\medskip

What we need now for an implementation is a unique way to represent sets. Indeed, in order to avoid unduly complicated representations of sets, we have to make our representation canonical. Therefore, let us set the theoretical framework we use in practice.\\

Let $\rondZ \subseteq \partie{\setZ^n}$ and $\rondD \subseteq \partie{\setD^n}$. Notice that if $R=(Z+D_1)\cup (Z+D_2)$, then $R=Z+D$ with $D=D_1\cup D_2$~; we will always suppose that $\rondD$ is closed under union wlog. Then, notice that $R \subseteq \setR^n$ can be represented by a partially defined function $f_R$ such that :
\begin{align*}
f_R :\ & \rondZ \longrightarrow \rondD \\
& Z_i \longmapsto D_i
\end{align*}
This function's interpretation is defined as $\dsp{ \inter{f_R}=\bigcup_{i=1}^p\Bigl(Z_i+f_R(Z_i)\Bigr)}$, which matches the natural writing of $R$ introduced in section \ref{sec:composing}. Note that this representation $f_R$ is not unique.\\

For technical reasons, we extend $f_R$ to a totally defined function $\overline{f_R}$ s.t. $\overline{f_R}(Z)=\emptyset$ if $Z \notin \dom{f_R}$ and $\overline{f_R}(Z)=f_R(Z)$ otherwise. Moreover, we define the support of $\overline{f_R}$ as $\supp{\overline{f_R}} = \{Z \mid \overline{f_R}(Z) \neq \emptyset\}$. In the remainder of this paper, we will use without ambiguity the notation $f_R$ instead of $\overline{f_R}$.\\

We are now able to represent the set $R$ with a function we wish to handle. Therefore, we want to identify $f_R$ and $\inter{f_R}$ : in order to do so, this latter interpretation has to be an injection. Generally, this is not the case : using the previous definitions, we could have different writings of $\inter{f_R}$. However, if the images by $f_R$ are disjoint, then the interpretation $\inter{f_R}$ is an injection. Finally, for effectivity reasons, we will only consider functions whose support is finite. In the remainder of this section, we formalize this reasoning.\\
Let $\func_{\rondZ\rightarrow\rondD} = \{ f :\ \rondZ\longrightarrow\rondD \mid \supp{f}\mbox{ is finite}\}$.
\begin{dfn}
The \emph{interpretation function} $\inter{.}$ associates to every $f \in \func_{\rondZ\rightarrow\rondD}$ a set of real vectors defined by $\dsp{ \inter{f}=\bigcup_{Z\in \supp{f}} \Bigl(Z+f(Z)\Bigr)}$.
\end{dfn}

Notice that since $\supp{f}$ is finite, $\func_{\rondZ \rightarrow \rondD}$ do not suffice to represent every set of real vectors, as shown in the counter-example on page \pageref{ex:infini}. Let us now restrict ourselves to the functions we handle :

\begin{dfn}
An \emph{IDF (Integer-Decimal Function)} is a function $f \in \func_{\rondZ\rightarrow\rondD}$ such that $\bigcup_{Z}f(Z)=\setD^n$ and such that $Z \neq Z' \implies f(Z)\cap f(Z') = \emptyset$. We denote them all by $IDF_{\rondZ \rightarrow \rondD}= \left\lbrace f \in \func_{\rondZ\rightarrow\rondD} \mid f \mbox{ is an IDF}\right\rbrace$. We also write $\dsp{\inter{IDF_{\rondZ \rightarrow \rondD}} = \left\lbrace \inter{f} \mid f \in IDF_{\rondZ \rightarrow \rondD} \right\rbrace}$.
\end{dfn}

The sets from examples \ref{ex:set}, \ref{ex:eq}, \ref{ex:leq}, \ref{ex:sum} are represented by the following IDF :

\begin{exe}\label{ex:set2}
  The empty set $\emptyset$ is represented by the IDF $f_\bot$ defined by $f_\bot(Z)=\emptyset$ for any $Z\not=\emptyset$ and by $f_\bot(\emptyset)=\setD^n$. The set $\setR^n$ is represented by the IDF $f_\top$ (also noted $f_{\setR^n}$) defined by $f_\top(\setZ^n)=\setD^n$ and $f_\top(Z)=\emptyset$ otherwise. The set $\setZ^n$ is represented by the IDF $f_{\setZ^n}$ defined by $f_{\setZ^n}(\setZ^n)=\{\vec{0}\}$ and $f_{\setZ^n}(Z)=\emptyset$ otherwise.
\end{exe}

\begin{exe}\label{ex:eq2}
  The set $R_==\{\vec{r}\in\setR^2 \st r_1= r_2\}$ is represented by the IDF $f_=$ defined by $f_=(Z_=)=D_=$, $f_=(\emptyset)=\setD^2\moins D_=$ and $f_=(Z)=\emptyset$ otherwise, where:
  \begin{align*}
    Z_=&=\{\vec{z}\in \setZ^2 \st z_1=z_2\} &
    D_=&=\{\vec{d}\in \setD^2 \st d_1=d_2\}
  \end{align*}
\end{exe}

\begin{exe}\label{ex:leq2}
  The set $R_\leq=\{\vec{r}\in\setR^2 \st r_1\leq r_2\}$ is represented by the IDF $f_\leq$ defined by $f_\leq(Z_<)=D_>$, $f_\leq(Z_\leq)=D_\leq$ and $f_\leq(Z)=\emptyset$ otherwise where:
  \begin{align*}
    Z_<&=\{\vec{z}\in \setZ^2 \st z_1<z_2\} &
    D_>&=\{\vec{d}\in \setD^2 \st d_1>d_2\}  \\
    Z_\leq&=\{\vec{z}\in \setZ^2 \st z_1\leq z_2\} &
    D_\leq&=\{\vec{d}\in \setD^2 \st d_1\leq d_2\}
  \end{align*}
\end{exe}

\begin{exe}\label{ex:sum2}
  The set $R_+=\{\vec{r}\in\setR^3 \st r_1+r_2=r_3\}$ is represented by the IDF $f_+$ defined by $f_+(Z_0)=D_0$, $f_+(Z_1)=D_1$, $f_+(\emptyset)=\setD^3\moins (D_1\cup D_2)$ and $f_+(Z)=\emptyset$ otherwise where (intuitively $c \in\{0,1\}$ denotes a carry) :
  \begin{align*}
    Z_c &=\{\vec{z}\in \setZ^3 \st z_1+z_2+c =z_3\}\\
    D_c &=\{\vec{d}\in \setD^3 \st d_1+d_2=d_3+c\}
  \end{align*}
\end{exe}

Observe that any set in $\inter{IDF_{\rondZ_n\rightarrow\rondD_n}}$ is in $\rondZ_n\uplus\rondD_n$. The converse is obtained by proving the following proposition~:
\begin{prp}[Closure by union]\label{prp:union}
Let $R \in \inter{IDF_{\rondZ_n \longrightarrow \rondD_n}}$. Then, for any $Z \in \rondZ_n$ and $D \in \rondD_n$, we also have $R \cup (Z+D) \in \inter{IDF_{\rondZ_n \rightarrow \rondD_n}}$.
\end{prp}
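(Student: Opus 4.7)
The plan is to construct an IDF $g$ representing $R \cup (Z+D)$ by refining the IDF $f$ that represents $R$. Recall that $f$ partitions $\setD^n$ (up to empty pieces) into decimal cells $f(Z_i)$ indexed by $Z_i \in \supp{f}$, each carrying its integer label $Z_i$. To absorb $Z+D$, I split every cell $f(Z_i)$ along $D$ into $f(Z_i) \moins D$ and $f(Z_i) \cap D$: the first sub-cell keeps its label $Z_i$ (the new set $Z+D$ contributes nothing above those decimals), while the second is relabeled by $Z_i \cup Z$ (we must now admit the extra integer shifts from $Z$).

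Concretely, for every $A \in \rondZ_n$ I would define
$$g(A) \;=\; \bigl[f(A) \moins D\bigr] \;\cup\; \bigcup_{\substack{Z_i \in \supp{f} \\ Z_i \cup Z = A}} \bigl[f(Z_i) \cap D\bigr],$$
with the convention $f(A) = \emptyset$ when $A \notin \supp{f}$. The support of $g$ is contained in the finite set $\supp{f} \cup \{Z_i \cup Z \mid Z_i \in \supp{f}\}$, and $g$ takes values in $\rondD_n$ as soon as the latter is closed under union, intersection and set difference; closure of $\rondZ_n$ under union is used so that each index $Z_i \cup Z$ lies in $\rondZ_n$. Both closures are supplied by stability (proposition \ref{prp:stable}).

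Next I would check the three defining properties of an IDF. Coverage of $\setD^n$ is immediate: each $\vec{d} \in \setD^n$ lies in a unique cell $f(Z_i)$ and is routed either to $g(Z_i)$ (when $\vec{d} \notin D$) or to $g(Z_i \cup Z)$ (when $\vec{d} \in D$). Disjointness is the delicate part, and the one I expect to require the most care: for $A \neq A'$, a short case analysis on which branch a hypothetical common point $\vec{d} \in g(A) \cap g(A')$ comes from leads either to a contradiction with the disjointness of the $f(Z_i)$'s, to a contradiction between $\vec{d} \in D$ and $\vec{d} \notin D$, or to the forbidden equality $A = Z_i \cup Z = A'$.

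Finally, I would verify $\inter{g} = R \cup (Z+D)$ by double inclusion. Forward: a vector $\vec{z}' + \vec{d}$ with $\vec{z}' \in A$ and $\vec{d} \in g(A)$ either comes from the $f(A) \moins D$ branch, in which case it belongs to $R$, or from an $f(Z_i) \cap D$ branch with $A = Z_i \cup Z$, in which case $\vec{z}'$ is in $Z_i$ (yielding a point of $R$) or in $Z$ (yielding a point of $Z+D$). Backward: any point of $R$ decomposes along $f$ and is placed in the appropriate sub-cell of $g$; any $\vec{z}' + \vec{d} \in Z+D$ has its decimal part $\vec{d}$ in some $f(Z_i)$, hence in $g(Z_i \cup Z)$, while $\vec{z}' \in Z \subseteq Z_i \cup Z$. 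The only real obstacle is the disjointness case analysis, which is where I would take most care.
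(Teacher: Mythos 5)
Your construction is exactly the one the paper uses: the same refinement $f(A)\moins D$ keeping label $A$ and $f(Z_i)\cap D$ relabeled by $Z_i\cup Z$, with the same coverage and disjointness case analysis. The only cosmetic difference is that you verify $\inter{g}=R\cup(Z+D)$ by double inclusion where the paper writes an equational chain of unions, so this is essentially the paper's proof.
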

\begin{proof}
We consider an IDF $f:\rondZ_n \longrightarrow \rondD_n$ such that $\inter{f}=R$ and two sets $Z \in \rondZ_n$ and $D \in \rondD_n$. We must prove that there exists an IDF $f':\rondZ_n \longrightarrow \rondD_n$ such that $\inter{f'}=R'$ with $R'=R\cup (Z+D)$. We consider the following function:
  $$\begin{array}{cccl}
    f': 
    & \rondZ_n & \longrightarrow & \rondD_n\\
    &    Z'    & \longrightarrow & \displaystyle \Bigl(f(Z')\moins D\Bigr)\bigcup_{Z''\st Z''\cup Z=Z'}\Bigl(f(Z'')\cap D\Bigr)\\
  \end{array}$$
  As expected we are going to prove that $f'$ is an IDF such that $\inter{f'}=R'$. We first show that $f'$ is an IDF. First of all observe that $\bigcup_{Z'}f'(Z')=\setD^n$. Next, let $Z_1',Z_2'\in \rondZ_n$ such that $f'(Z_1')\cap f'(Z_2')\not=\emptyset$ then either $(f(Z_1')\moins D)\cap (f(Z_2')\moins D)\not=\emptyset$ or there exists $Z_1'',Z_2''$ such that $Z_1''\cup Z=Z_1'$ and $Z_2''\cup Z=Z_2'$ and $(f(Z_1'')\cap D)\cap(f(Z_2'')\cap D)\not=\emptyset$ since the other cases are not possible. But $(f(Z_1')\moins D)\cap (f(Z_2')\moins D)\not=\emptyset$ implies $f(Z_1')\cap f(Z_2')\not=\emptyset$ and since $f$ is an IDF we get $Z_1'=Z_2'$. And $(f(Z_1'')\cap D)\cap(f(Z_2'')\cap D)\not=\emptyset$ implies $Z_1''=Z_2''$ and in particular $Z_1'=Z_2'$. We have proved that $f'$ is an IDF. Finally, equality $\inter{f'}=R'$ comes from:
  \begin{align*}
    \inter{f'}
    =&\bigcup_{Z'}(Z'+f'(Z'))\\
    =&\bigcup_{Z'} \Bigl( (Z'+(f(Z')\moins D))\\
    &\bigcup_{Z'' \st Z''\cup Z=Z'}(Z'+(f(Z'')\cap D)) \Bigr) \\
    =&\bigcup_{Z'}(Z'+(f(Z')\moins D))\bigcup_{Z''}((Z''\cup Z)+(f(Z'')\cap D))\\
    =&\bigcup_{Z''}(Z''+((f(Z'')\moins D)\cup (f(Z'')\cap D)))\\
	&\cup (Z+D\cap (\bigcup_{Z''}f(Z'')))\\
    =&\inter{f}\cup (Z+D)
  \end{align*}
\end{proof}

Hence, we have just proved the following proposition :
\begin{prp}
  $\rondZ_n\uplus\rondD_n=\inter{IDF_{\rondZ_n\rightarrow\rondD_n}}$
\end{prp}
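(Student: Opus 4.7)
My plan is to show the two inclusions separately, with the nontrivial direction reduced to an induction on the size of the union, so that the heavy lifting is done entirely by Proposition~\ref{prp:union}.

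For the inclusion $\inter{IDF_{\rondZ_n\rightarrow\rondD_n}}\subseteq\rondZ_n\uplus\rondD_n$, I would simply unpack the definitions: given $f\in IDF_{\rondZ_n\rightarrow\rondD_n}$, the support $\supp{f}$ is finite by construction, and $\inter{f}=\bigcup_{Z\in\supp{f}}(Z+f(Z))$ is a finite union of sums $Z_i+D_i$ with $Z_i\in\rondZ_n$ and $D_i=f(Z_i)\in\rondD_n$, which is exactly the form of a $\rondZ_n\uplus\rondD_n$ set. This direction is already observed in the text just before the statement, so I would dispatch it in one sentence.

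For the converse inclusion $\rondZ_n\uplus\rondD_n\subseteq\inter{IDF_{\rondZ_n\rightarrow\rondD_n}}$, I would take an arbitrary $R=\bigcup_{i=1}^p(Z_i+D_i)$ with $(Z_i,D_i)\in\rondZ_n\times\rondD_n$ and argue by induction on $p$. The base case $p=0$ gives $R=\emptyset$, and the IDF $f_\bot$ from Example~\ref{ex:set2} (defined by $f_\bot(\emptyset)=\setD^n$ and $f_\bot(Z)=\emptyset$ otherwise) satisfies $\inter{f_\bot}=\emptyset+\setD^n=\emptyset$, is a well-formed IDF since the only nonempty image is $\setD^n$ (trivially covering $\setD^n$ and trivially disjoint), and so $\emptyset\in\inter{IDF_{\rondZ_n\rightarrow\rondD_n}}$. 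For the inductive step, write $R=R'\cup(Z_{p+1}+D_{p+1})$ where $R'=\bigcup_{i=1}^p(Z_i+D_i)$; the inductive hypothesis places $R'$ in $\inter{IDF_{\rondZ_n\rightarrow\rondD_n}}$, and a direct application of Proposition~\ref{prp:union} (closure under union with an arbitrary $Z+D$) puts $R$ in $\inter{IDF_{\rondZ_n\rightarrow\rondD_n}}$ as well.

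There is essentially no obstacle here, since Proposition~\ref{prp:union} has already absorbed the real difficulty, namely the explicit construction of an IDF $f'$ representing $R\cup(Z+D)$ together with the verification that $f'$ still has disjoint images and covers $\setD^n$. The only point worth flagging in the write-up is the mild implicit assumption on the base case that $\emptyset\in\rondZ_n$ and $\setD^n\in\rondD_n$, which is standard for the classes of interest in this paper (and tacitly used throughout Examples~\ref{ex:set2}--\ref{ex:sum2}); beyond that, the proof is a two-line induction wrapped around Proposition~\ref{prp:union}.
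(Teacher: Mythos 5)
Your proposal is correct and follows essentially the same route as the paper, which treats this proposition as an immediate consequence of the observation that $\inter{IDF_{\rondZ_n\rightarrow\rondD_n}}\subseteq\rondZ_n\uplus\rondD_n$ together with Proposition~\ref{prp:union}; you merely make the induction on the number of terms in the union explicit. Your remark about the tacit assumption $\emptyset\in\rondZ_n$ and $\setD^n\in\rondD_n$ is apt, as the paper relies on it implicitly in Example~\ref{ex:set2}.
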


Let us prove that this new representation is canonical~:

\begin{prp}\label{prp:unique}
For any $f_1,f_2 \in IDF_{\rondZ \rightarrow \rondD}$, $\inter{f_1}=\inter{f_2} \implies f_1=f_2$ .
\end{prp}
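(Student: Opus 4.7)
The plan is to show that an IDF $f$ can be fully reconstructed from the real set $R = \inter{f}$, via a formula that depends only on $R$. The pivotal fact is the uniqueness of the integer-fractional decomposition of a real vector: every $\vec r \in \setR^n$ can be written in exactly one way as $\vec r = \vec z + \vec d$ with $\vec z \in \setZ^n$ and $\vec d \in \setD^n$, because if $\vec z + \vec d = \vec z' + \vec d'$ then $\vec z - \vec z' \in \setZ^n$ equals $\vec d' - \vec d \in \,]{-}1,1[^n$, forcing both to be zero. I would state and use this as the first lemma.

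Next I would exploit the defining properties of an IDF. Since the family $\{f(Z)\}_{Z \in \rondZ}$ covers $\setD^n$ and its members are pairwise disjoint, for each $\vec d \in \setD^n$ there is a \emph{unique} $Z \in \rondZ$ with $\vec d \in f(Z)$; call this $Z_{\vec d}$. I would then prove that $Z_{\vec d}$ is recoverable from $R$ alone by the identity
\begin{equation*}
  Z_{\vec d} \;=\; \{\,\vec z \in \setZ^n \mid \vec z + \vec d \in R\,\}.
\end{equation*}
The inclusion $\subseteq$ is immediate from the definition of $\inter{f}$. For $\supseteq$, if $\vec z + \vec d \in R = \bigcup_{Z' \in \supp{f}} (Z' + f(Z'))$, write $\vec z + \vec d = \vec z' + \vec d'$ with $\vec z' \in Z'$ and $\vec d' \in f(Z')$; by the uniqueness lemma $\vec z = \vec z'$ and $\vec d = \vec d'$, so $\vec d \in f(Z')$ and disjointness gives $Z' = Z_{\vec d}$, hence $\vec z \in Z_{\vec d}$.

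Finally I would invert this relation: for each $Z \in \rondZ$,
\begin{equation*}
  f(Z) \;=\; \{\,\vec d \in \setD^n \mid Z_{\vec d} = Z\,\},
\end{equation*}
where the right-hand side is fully determined by $R$. Applying this to $f_1$ and $f_2$ with $R = \inter{f_1} = \inter{f_2}$ yields $f_1(Z) = f_2(Z)$ for every $Z$, hence $f_1 = f_2$.

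I do not expect any serious obstacle; the only delicate point is checking the uniqueness of the integer-fractional decomposition carefully, and being mindful that the identity $Z_{\vec d} = (R - \vec d) \cap \setZ^n$ genuinely requires the IDF disjointness property (otherwise $\vec d$ could lie in several $f(Z')$ and $Z_{\vec d}$ would not be well defined). Nothing in the argument depends on the finiteness of $\supp{f}$, so it is a clean algebraic consequence of the IDF axioms.
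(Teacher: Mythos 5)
Your proof is correct and follows essentially the same route as the paper's: both arguments rest on the uniqueness of the integer--decimal decomposition of a real vector together with the covering and disjointness axioms of an IDF, in order to show that the set $Z$ whose image contains a given $\vec d$ is determined by $\inter{f}$ alone. The paper phrases this as a symmetric double-inclusion element chase, whereas you package it as the explicit reconstruction $Z_{\vec d}=\{\vec z\in\setZ^n \mid \vec z+\vec d\in R\}$; the mathematical content is the same.
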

\begin{proof}
Consider $Z_1\subseteq \setZ^n$ and let us prove that $f_1(Z_1)\subseteq f_2(Z_1)$. Naturally, we can assume that $f_1(Z_1)\not=\emptyset$ since otherwise the inclusion is immediate. In this case, there exists $\vec{d}\in f_1(Z_1)$. As $(f_2(Z))_{Z}$ forms a sharing of $\setD^n$, there exists $Z_2$ such that $\vec{d}\in f_2(Z_2)$. Let us prove that $Z_1\subseteq Z_2$. We can assume that $Z_1\not=\emptyset$. Let $\vec{z}_1\in Z_1$ and observe that $\vec{r}_1=\vec{z}_1+\vec{d}\in \inter{f_1}$ and from $\inter{f_1}=\inter{f_2}$ we get $\vec{r}_1\in \inter{f_2}$. Thus, there exists $Z_2'$ such that $\vec{r}_1\in Z_2'+f_2(Z_2')$. Since $Z_2'\subseteq \setZ^n$ and $f_2(Z_2')\subseteq \setD^n$ we get $\vec{z}_1\in Z_2'$ and $\vec{d}\in f_2(Z_2')$. As $(f_2(Z))_{Z}$ forms a sharing of $\setD^n$ and $\vec{d}\in f_2(Z_2)\cap f_2(Z_2')$ we get $Z_2=Z_2'$. In particular $\vec{z}_1\in Z_2$ and we have proved that $Z_1\subseteq Z_2$. The other inclusion $Z_2\subseteq Z_1$ is obtained symetrically. We have proved that $Z_1=Z_2$. Therefore, $f_1(Z_1)\subseteq f_2(Z_1)$ for any $Z_1$. By symmetry we deduce that $f_1(Z)=f_2(Z)$ for any $Z$. Therefore $f_1=f_2$.
\end{proof}

Notice that in practice, this canonicity depends on how the sets in $\rondZ$ and $\rondD$ are represented. Indeed, if any of these representations are not canonical, then we can not guarantee that an $IDF_{\rondZ \rightarrow \rondD}$ will be canonical.

\section{Conclusion}
We have proposed a decomposition of three known classes into finite unions of sums of integers and decimals, providing a new characterization. This decomposition can be applied to other subsets of real vectors, and possibly yield an interest in the exploration of decidable subclasses of the full arithmetic.\\

Our main goal is to use this representation of real vectors to verify infinite systems involving counters and clocks. Indeed, we wish to extend the abilities of the tool \textsc{Fast} \cite{BFLP03} to the reals, so that it can compute exact reachability sets using acceleration techniques. A first step in such an implementation is the framework \textsc{Genepi}, allowing to solve mixed integer and real constraints defined in first-order theories. Thus, our decomposition would allow working separately on integers and reals.\\


Another advantage of our decomposition is that we can now compute operations that we did not know how to perform on certain logics. For example, there is currently no algorithm computing directly the convex hull of a set defined in $\fo{\setR,\setZ,+,\leq}$ ; but thanks to our decomposition, the problem reduces to the computation of the convex hull of Presburger-definable sets (as automata \cite{Bruyere94} or as semi-linear sets \cite{GinSpa66}), and the convex hull of sets definable in $\fo{\setD,+,\leq}$ (as finite unions of convex sets, using Fourier-Motzkin). We can push this reasoning to other symbolic representations and to other operations, such as upward or downward closure.\\

Globally, this method of separating integers and reals would speed up the software development process, because of the ease of using already existing plugins. As mentioned above, one can test the combination of any pair of plugins (provided there's at least one working on reals and another one on integers). Furthermore, a very interesting point is that a programmer can test his new plugin for real sets directly in \textsc{Genepi}, and then extend its expressivity by coupling it with \textsc{PresTAF} or another plugin handling integer sets. Obviously, the converse (extending an integer plugin to the reals) is also possible in the same fashion.\\

\bibliographystyle{latex8}
\bibliography{biblio}


\end{document}